\theoremstyle{plain}
\newtheorem{proposition}[theorem]{Proposition}
\title{A Uniform Approach to Random Process Model}
\titlerunning{A Uniform Approach to Random Process Model} 
\author[1]{Yuxi Fu}
\affil[1]{BASICS, Shanghai Jiao Tong University, China\\
  \texttt{fu-yx@cs.sjtu.edu.cn}}
\authorrunning{Yuxi Fu} 
\subjclass{F.1.1 Models of Computation, F.1.2 Modes of Computation}
\keywords{probabilistic process, branching bisimulation, divergence}
\begin{document}

\maketitle

\begin{abstract}
There is a lot of research on probabilistic transition systems.
There are not many studies in probabilistic process models.
The lack of investigation into the interactive aspect of probabilistic processes is mainly due to the difficulty caused by the discrepancy between probabilistic actions and nondeterministic behaviours.
The paper proposes a uniform approach to probabilistic process models and a bisimulation congruence for probabilistic concurrency.
\end{abstract}

\section{Introduction}\label{Introduction}

Randomization plays an indispensable role in computer science.
The celebrated result, the PCP Theorem~\cite{AroraLundMotwaniSudanSzegedy1992}, reveals the power of ``interaction+randomness+error'' in problem solving.
Given an NP complete problem, one may design an interactive proof system consisting of a verifier and a prover~\cite{GoldwasserMicaliRackoff1985,Babai1985}.
Upon receiving a problem instant the verifier accepts or rejects the input with high confidence in polynomial time by using logarithmic random bits and asking a constant number of questions to the prover.
The scenario can be generalized to a multi-prover situation with an increased power on the verifier side~\cite{Ben-OrGoldwasserKilianWigderson1988,BabaiFortnowLund1991,FortnowRompelSipser1994}.
This fundamental result is significant to modern computing systems, which are open, distributed, interactive, and have both nondeterministic behaviours and randomized choices.
To formalize models in which results like the PCP Theorem apply, one may introduce randomization to interaction models (process models).
There are two kinds of randomness in randomized process models.
A process may send a random value to another; and it may randomly choose whom it will send  a value to.
We call the former {\em content randomness} and the latter {\em channel randomness}.
Content randomness is basically a computational issue~\cite{Vadhan2012,FuYuxi2017}, whereas channel randomness is to do with interaction.

What kind of channel randomness are there?
In literature one finds basically two answers to the question~\cite{LarsenSkou1989,HanssonJonsson1989,WangLarsen1992,Segala1995,vanGlabbeekSmolkaSteffen1995}.
{\em Generative models} feature probabilistic choice for external actions.
The standard syntax for a probabilistic choice term is of the form
\begin{equation}\label{probability-operator}
\bigoplus_{i\in I}p_i\ell_i.T_i,
\end{equation}
where $p_i\in(0,1)$ and $\Sigma_{i\in I}p_i=1$.
The infix notation $p_1\ell_1.T_1\oplus\ldots\oplus p_k\ell_K.T_k$ is often used.
The semantics is defined by $\bigoplus_{i\in I}p_i\ell.T_i\stackrel{\ell_i}{\longrightarrow}_{p_i}T_i$, meaning that $\bigoplus_{i\in I}p_i\ell.T_i$ may evolve into $T_i$ with probability $p_i$ by performing the action $\ell_i$.
The generative model is problematic in the presence of the {\em interleaving} composition operator and the localization operator.
Let $A$ be $\frac{1}{2}a\oplus\frac{1}{2}b$ and $C$ be $\frac{2}{3}\overline{b}\oplus\frac{1}{3}\overline{c}$.
What is then the behaviour of $A\,|\,C$?
And how about $(b)(A\,|\,C)$ and $(a)(c)(A\,|\,C)$?
What is the probability of $A$ interacting with $C$ at channel $b$ in $(a)(c)(A\,|\,C)$?
In $(c)C$ interaction at channel $c$ is disabled.
How does that reconcile with the prescription that $C$ interacts at channel $c$ with probability $1/3$?
It does not sound right to say that $(c)C$ performs the $\overline{b}$ action with probability one.
A reasonable semantics is that $(c)C$ may do the $\overline{b}$ action with probability $2/3$ and becomes dead with probability $1/3$.
If this is indeed the interpretation, $C$ should really be $\frac{2}{3}\overline{b}\oplus\frac{1}{3}\tau.\overline{c}$.
Symmetrically one may argue that $\frac{2}{3}\overline{b}\oplus\frac{1}{3}\tau.\overline{c}$ should really be $\frac{2}{3}\tau.\overline{b}\oplus\frac{1}{3}\tau.\overline{c}$.
All problems with the probabilistic choice~(\ref{probability-operator}) is gone if it is replaced by the {\em random choice} term
\begin{equation}\label{random-operator}
\bigoplus_{i\in I}p_i\tau.T_i,
\end{equation}
where the size of the index set $I$ is at least $2$ and $\sum_{i\in I}p_i=1$.
Thus $0<p_i<1$ for all $i\in I$.
Early generative models are {\em fully probabilistic}~\cite{BaierHermanns1997}.
Nondeterminism was considered later~\cite{Segala1995}.

In reactive models, introduced by Larsen and Skou~\cite{LarsenSkou1989} and popularized by the work of van Glabbeek, Smolka and Steffen~\cite{vanGlabbeekSmolkaSteffen1995}, nondeterministic choice and probabilistic choice come in alternation.
Using a suggestive notation one may write for example
\begin{equation}\label{2019-01-05}
a.\left(\frac{1}{2}A_1\uplus\frac{1}{2}A_2\right)+b.\left(\frac{1}{3}B_1\uplus\frac{2}{3}B_2\right).
\end{equation}
This is a process that may perform an $a$ action and turns into $A_1$ with probability $1/2$ and $A_2$  with probability $1/2$.
It may also do an interaction at channel $b$ and becomes $B_1$ with probability $1/3$ and $B_2$  with probability $2/3$.
It is not helpful to think of $\frac{1}{2}A_1\uplus\frac{1}{2}A_2$ simply as a distribution over $\{A_1,A_2\}$.
The distribution can only be achieved by carrying out a certain amount of computation, say invoking a random number generator.
The details of the computation can be abstracted away, but it should definitely be formalized as an internal action.
The best way to understand the process in~(\ref{2019-01-05}) is to see it as a simplification of
\begin{equation}\label{2019-01-15}
a.\left(\frac{1}{2}\tau.A_1\oplus\frac{1}{2}\tau.A_2\right)+b.\left(\frac{1}{3}\tau.B_1\oplus\frac{2}{3}\tau.B_2\right).
\end{equation}
The process in~(\ref{2019-01-15}) may do an external nondeterministic choice, and then an internal random choice.
This is why reactive models are also called (strict) alternating models.
However once we have separated the two kinds of choice, there is no point in insisting on the alternation.
What it means is that we might as well give up on generative probabilistic choice and reactive probabilistic choice altogether in favour of~(\ref{random-operator}) and nondeterministic choice.
A systematic exposure of the research progress on reactive models is given in Deng's excellent book~\cite{Deng2015}.

The central issue in defining a probabilistic process model is the treatment of nondeterminism in the presence of probabilistic choice.
The philosophy we shall be following in this paper is that nondeterminism is an attribute of interaction while randomness is a computational feature.
Nondeterminism is a system feature, which cannot be implemented.
Randomness is a process property, which can be implemented with a negligible error.
We advocate a model independent methodology that turns an interaction model into a randomized interaction model by adjoining~(\ref{random-operator}).
The semantics of the random operator is defined by
\begin{equation}\label{oplus-rule}
\inference{}{\bigoplus_{i\in I}p_i\tau.T_i \stackrel{p_i\tau}{\longrightarrow}T_i}.
\end{equation}
We emphasize that the label $p_i\tau$ should be understood as the same thing as $\tau$.
The additional information attached by $p_i$ is to help reasoning with the bisimulation semantics.
Talking about bisimulation equivalence it is often useful to think of the transitions defined by~(\ref{oplus-rule}) as a single silent transition.
We introduce the {\em collective silent transition}
\begin{equation}\label{2019-01-16}
\bigoplus_{i\in I}p_i\tau.T_i\stackrel{\coprod_{i\in I}p_i\tau}{\longrightarrow}\coprod_{i\in I}T_i.
\end{equation}
The collective silent transition is closed under composition, localization and recursion.

Strong bisimulations for probabilistic labeled transition systems, pLTS for short, are well understood~\cite{LarsenSkou1989,HanssonJonsson1989,Segala1995,vanGlabbeekSmolkaSteffen1995,Deng2015}.
Weak bisimulations have been studied for reactive models~\cite{SegalaLynch1994,Deng2015} and alternation models~\cite{Philippou1LeeSokolsky2000}.
In the presence of probabilistic choice a silent transition sequence appears as a tree of silent transitions.
Schedulers are introduced to resolve the nondeterminism when constructing such a tree.
Branching bisimulations have also been studied for reactive models~\cite{SegalaLynch1994}.

Our current understanding of weak/branching bisimulations for probabilistic process models is not very satisfactory in several accounts, which can be summarized as follows.
\begin{itemize}
\item
Majority of the works are about pLTS.
In a pLTS process combinators disappear.
As far as we know none of the weak/branching bisimulations studied in literature is closed under all the three indispensable process combinators, the composition, localization and recursion operators.
In fact some of them is closed in none of the three operators.
This is not surprising because a pLTS without referring to any model defines a semantics for automata~\cite{Segala2006}, not a semantics for processes.
There are suggestions to look at synchronous probabilistic process models~\cite{vanGlabbeekSmolkaSteffen1995,BaierHermanns1997}.
A basic problem in the synchronous scenario is if internal actions are synchronized.
A yes answer seems to contradict to the very idea of observational theory.
But if the silent transitions are not synchronized, the composition operator is unlikely associative.
\item
A consequence of the failure to account for the composition and localization, most results, even definitions, apply to only finite state probabilistic processes~\cite{Philippou1LeeSokolsky2000,AndovaWillemse2006,Deng2015}.
The coincidence between the weak bisimularity and the branching bisimilarity for example is only proved for the finite state fully probabilistic processes~\cite{BaierHermanns1997}.
In fact in literature probabilistic processes are often defined as finite labeled graphs~\cite{Philippou1LeeSokolsky2000} or labeled concurrent Markov chains~\cite{Vardi1985}.
These restricted models preempt any study on process combinators.
\item
The issue of divergence has not been properly dealt with.
This is definitely an omission, especially so in the presence of random silent actions.
\end{itemize}

The main task of the paper is to justify the model independent methodology proposed in the above.
We shall convince the reader not only that randomization of process calculi ought to be model independent, but also that the bisimulation theory of the randomized version of any process model $\mathbb{M}$ can be obtained from the bisimulation theory of $\mathbb{M}$ in a uniform manner.
Section~\ref{RCCS} defines a randomized process model.
For simplicity the model is taken to be a sub-model of Milner's CCS.
Section~\ref{Epsilon-Tree} introduces $\epsilon$-tree and showcases its role in transferring the bisimulation theory of a model $\mathbb{M}$ to the bisimulation theory of randomized $\mathbb{M}$.
Section~\ref{Equality4RCCS} proves the congruence property of the bisimulation equivalence.
Section~\ref{Comment} makes some final comment.

\section{Random Process Model}\label{RCCS}

Let $Chan$ be the set of channels, ranged over by lowercase letters.
Let $\overline{Chan}=\{\overline{a}\mid a\in Chan\}$.
The set $Chan\cup\overline{Chan}$ will be ranged over by small Greek letters.
We let $\overline{\alpha}=a$ if $\alpha=\overline{a}$.
The set of actions is $Act=Chan\cup\overline{Chan}\cup\{\tau\}$.
We write $\ell$ and its decorated versions for elements of $Act$.
The grammar of CCS~\cite{Milner1989} is as follows:
\begin{equation}\label{2019-01-25}
S,T \;:=\; X \mid \sum_{i\in I}\alpha_i.T_i \mid S\,|\,T \mid (a)T \mid \mu X.T,
\end{equation}
where the indexing set $I$ is finite.
We write ${\bf 0}$ for the {\em nondeterministic term} $\sum_{i\in\emptyset}\alpha_i.T_i$ in which $\emptyset$ is the empty set.
A trailing ${\bf 0}$ is often omitted.
We also use the infix notation of $\sum$, writing for example $\alpha_1.T_1+\alpha_2.T_2+\alpha_3.T_3$.
A process variable $X$ that appears in $\sum_{i\in I}\alpha_i.T_i$ is guarded.
We shall assume that in the {\em fixpoint term} $\mu X.T$ the {\em bounded} variable $X$ is guarded in $T$.
A term is a {\em process} if it contains no free variables.
We write $A,B,C,D,E,F,G,H$ for processes.
Let $\mathcal{T}_{\mathrm{CCS}}$ be the set of all CCS terms and $\mathcal{P}_{\mathrm{CCS}}$ be the set of all CCS processes.
A {\em finite state} term/process is a term/process that contains neither the composition operator nor the localization operator.
We can define $\tau$-prefix in the standard manner.
For example $a.A+\tau.B$ can be defined by $(c)(\overline{c}\,|\,(a.A+c.B))$ for some fresh $c$.
From now on we shall use this derived notation without further comment.
The transition semantics of CCS is generated by the following rules, where $\lambda\in Act$.
\vspace*{-2mm}
\[\begin{array}{ccc}
\inference{}{\sum_{i\in I}\alpha_i.T_i \stackrel{\alpha_i}{\longrightarrow} T_i}\ \ \ \ &
\inference{S\stackrel{\overline{\alpha}}{\longrightarrow} S'\ \ \ \ T\stackrel{\alpha}{\longrightarrow} T'}{S\,|\,T\stackrel{\tau}{\longrightarrow} S'\,|\,T'}\ \ \ \ &
\inference{T\stackrel{\lambda}{\longrightarrow} T'}{S\,|\,T\stackrel{\lambda}{\longrightarrow} S\,|\,T'}
\end{array}\]
\[\begin{array}{ccc}
\inference{S\stackrel{\lambda}{\longrightarrow}S'}{S\,|\,T\stackrel{\lambda}{\longrightarrow} S'\,|\,T}\ \ \ \ &
\inference{T\stackrel{\lambda}{\longrightarrow}T'}{(a)T\stackrel{\lambda}{\longrightarrow} (a)T'}\ a\notin\lambda\ \ \ \ &
\inference{T\{\mu X.T/X\}\stackrel{\lambda}{\longrightarrow} T'}{\mu X.T\stackrel{\lambda}{\longrightarrow} T'}
\end{array}\]

\vspace*{1mm}
For an equivalence $\mathcal{E}$ on $\mathcal{P}_{\mathrm{CCS}}$ we write $A\,\mathcal{E}B$ for $(A,B)\in\mathcal{E}$.
The advantage of the infix notation is that we may write for example $A\mathcal{E}B\mathcal{E}C$ and $A\stackrel{\ell}{\longrightarrow}B\mathcal{E}C$.
The notation $\mathcal{P}_{\mathrm{CCS}}/\mathcal{E}$ stands for the set of equivalence classes defined by $\mathcal{E}$.
The equivalence class containing $A$ is denoted by $[A]_{\mathcal{E}}$, or $[A]$ when the equivalence is clear from context.
We write $A\stackrel{\tau}{\longrightarrow}_{\mathcal{E}}A'$ if $A\stackrel{\tau}{\longrightarrow}A'\mathcal{E}A$, and $\Longrightarrow_{\mathcal{E}}$ for the reflexive and transitive closure of $\stackrel{\tau}{\longrightarrow}_{\mathcal{E}}$.
For $\mathcal{C}\in\mathcal{P}_{\mathrm{CCS}}/\mathcal{E}$ we write $A\stackrel{\ell}{\longrightarrow}\mathcal{C}$ for the fact that $A\stackrel{\ell}{\longrightarrow}A'\in\mathcal{C}$ for some $A'$.
A process $A$ is $\mathcal{E}$-{\em divergent} if there is an infinite silent sequence $A\stackrel{\tau}{\longrightarrow}_{[A]_{\mathcal{E}}}\ldots\stackrel{\tau}{\longrightarrow}_{[A]_{\mathcal{E}}}\ldots$.

The {\em Randomized CCS}, RCCS for short, is defined on top of CCS.
The RCCS terms are obtained by extending the definition in~(\ref{2019-01-25}) with the randomized choice term defined in~(\ref{random-operator}).
A variable that appears in $\bigoplus_{i\in I}p_i\tau.T_i$ is also guarded.
The transition semantics of RCCS is defined by the above rules of CCS plus the rule defined in~(\ref{oplus-rule}).
The label $\lambda$ that appears in these rules ranges over $Act\cup\{p\tau\mid 0<p<1\}$.
The set of RCCS terms is denoted by $\mathcal{T}_{\mathrm{RCCS}}$ and that of RCCS processes by $\mathcal{P}_{\mathrm{RCCS}}$.

We shall find it convenient to interpret $T\stackrel{1\tau}{\longrightarrow}T'$ as $T\stackrel{\tau}{\longrightarrow}T'$.
So $\stackrel{p\tau}{\longrightarrow}$ is a random silent transition if $0<p<1$ and an interaction if $p=1$.
The (reflexive and) transitive closure of $\stackrel{\tau}{\longrightarrow}$ is denoted by $\stackrel{\tau}{\Longrightarrow}$ ($\Longrightarrow$).
We shall say that $p_1\ldots p_k$ is the probability of the silent transition sequence $T\stackrel{p_1\tau}{\longrightarrow}\ldots\stackrel{p_k\tau}{\longrightarrow}T'$.

\section{Epsilon Tree}\label{Epsilon-Tree}

Bisimulation equivalence is the standard equality for concurrent objects~\cite{Milner1989,Park1981}.
Bauer and Hermanns' technique~\cite{BaierHermanns1997} applied in the proof that the weak bisimilarity coincides with the branching bisimilarity on the finite-state fully probabilistic processes offers a convincing argument that one should focus on the branching bisimulation equivalence in probabilistic setting.
For any process equality $\asymp$ on $\mathcal{P}_{\mathrm{CCS}}$ one thinks of a silent transition $A\stackrel{\tau}{\longrightarrow}_{\asymp}A'$ as {\em state-preserving}, and a silent transition $A\stackrel{\tau}{\longrightarrow}A'$ such that $A'\not\asymp A$ as {\em state-changing}.
The basic idea of van Glabbeek and Weijland's branching bisimulation~\cite{vanGlabbeekWeijland1989-first-paper-bb,vanGlabbeekWeijland1996} is that a state-changing silent action must be explicitly bisimulated whereas state-preserving silent actions are ignorable.
If $B\asymp A\stackrel{\tau}{\longrightarrow}_{\asymp}A'$ then $B$ does not have to do anything because $B\asymp A'$.
If $B\asymp A\stackrel{\tau}{\longrightarrow}A'\not\asymp A$ then $A\stackrel{\tau}{\longrightarrow}A'$ must be simulated by some $B\stackrel{\tau}{\Longrightarrow}B'$.
Branching bisimulation requires that conversely $B\stackrel{\tau}{\Longrightarrow}B'$ must be simulated by $A\stackrel{\tau}{\longrightarrow}A'$.
It is in this sense that $A\stackrel{\tau}{\longrightarrow}A'$ is {\em bi}simulated by $B\stackrel{\tau}{\Longrightarrow}B'$.
The difference between branching bisimilarity and weak bisimilarity is that the former is a bisimulation equivalence whereas the latter is a simulation equivalence.
A minute's thought would lead us to believe that $B\stackrel{\tau}{\Longrightarrow}B'$ must be of the form $B\Longrightarrow_{\asymp}\stackrel{\tau}{\longrightarrow}B'\asymp A'$.
With these remarks in mind let us formalize the notion of branching bisimulation.
\begin{definition}\label{2018-12-23}
An equivalence $\mathcal{E}$ on $\mathcal{P}_{\mathrm{CCS}}$ is a {\em branching bisimulation} if for all $\ell$ and all $\mathcal{C}\in\mathcal{P}_{\mathrm{CCS}}/\mathcal{E}$ such that $\ell\ne\tau\vee\mathcal{C}\ne[A]$, the following statement is valid for all $A,B\in\mathcal{P}_{\mathrm{CCS}}$.
\begin{itemize}
\item If $B\,\mathcal{E}A\Longrightarrow_{\mathcal{E}}\stackrel{\ell}{\longrightarrow}\mathcal{C}$, then $B\Longrightarrow_{\mathcal{E}} \stackrel{\ell}{\longrightarrow}\mathcal{C}$.
\end{itemize}
\end{definition}
\vspace*{1mm}

\noindent Clearly $B\,\mathcal{E}\,A\stackrel{\tau}{\longrightarrow}_{\mathcal{E}}A'$ implies $B\,\mathcal{E}\,A'$.
It follows from definition that $A\stackrel{\tau}{\longrightarrow}_{\mathcal{E}}A'$ is bisimulated by $B$ vacuously.
That explains the condition $\ell\ne\tau\vee\mathcal{C}\ne[A]$.

The extensional equality for computation never identifies a nonterminating computation to a terminating computation. The best way to formalize this requirement in bisimulation semantics is introduced in~\cite{Priese1978}.
It is the key condition that turns a bisimulation equality for interaction to an equality for both interaction and computation~\cite{FuYuxi2016}.
\begin{definition}\label{2019-01-12}
An equivalence $\mathcal{E}$ on $\mathcal{P}_{\mathrm{CCS}}$ is {\em codivergent} if, for every $\mathcal{C}\in\mathcal{P}_{\mathrm{CCS}}/\mathcal{E}$, either all members of $\mathcal{C}$ are $\mathcal{E}$-divergent, or no member of $\mathcal{C}$ is $\mathcal{E}$-divergent.
\end{definition}
The union of a class of codivergent branching bisimulations on $\mathcal{P}_{\mathrm{CCS}}$ is a codivergent branching bisimulation on $\mathcal{P}_{\mathrm{CCS}}$~\cite{FuYuxi2016}.
So we may let $=_{\mathrm{CCS}}$ be the largest such relation on $\mathcal{P}_{\mathrm{CCS}}$.

Having motivated the bisimulation equality for CCS, we are in a position to randomize it as it were to an equality for RCCS.
In RCCS a silent transition is generally a distribution over a finite set of silent transitions.
A finite sequence of silent transitions in CCS then turns into a silent transition tree in RCCS.
To describe that we introduce an auxiliary definition.
\begin{definition}
Suppose $\mathcal{E}$ is an equivalence on $\mathcal{P}_{\mathrm{RCCS}}$ and $A\in\mathcal{P}_{\mathrm{RCCS}}$.
A {\em silent tree $t$ of $A$} is a labeled tree rendering true the following statements.
\begin{itemize}
\item Every node of $t$ is labeled by an element of $\mathcal{P}_{\mathrm{RCCS}}$.
The root of $t$ is labeled by $A$.
\item The edges are labeled by elements of $(0,1]$.
If an edge from a node labeled $A'$ to a node labeled $A''$ is labeled $p$, then $A'\stackrel{p\tau}{\longrightarrow}A''$.
\end{itemize}
An {\em $\mathcal{E}$-tree $t^A$ of $A$} is a silent tree of $A$ such that all the labels of the nodes of $t^A$ are in $[A]_{\mathcal{E}}$.
\end{definition}
If we confuse a node with its label, we may say for example that $A'\stackrel{q}{\longrightarrow}A''$ is an edge in $t^A$.
Definition~{\ref{2019-02-03} formalizes state-preserving silent transition sequence in the probabilistic setting.
\begin{definition}\label{2019-02-03}
An {\em $\epsilon$-tree $t_{\mathcal{E}}^A$ of $A$ with regard to $\mathcal{E}$} is an $\mathcal{E}$-tree of $A$ rendering true (\ref{xh},\ref{XH}).
\begin{enumerate}
\item \label{xh} If $B\stackrel{q}{\longrightarrow}B'$ for some $q\in(0,1)$, some collective silent transition $B\stackrel{\coprod_{i\in [k]}p_i\tau}{\longrightarrow}\coprod_{i\in[k]}B_i$ exists such that $B\stackrel{p_i}{\longrightarrow}B_i$ for all $i\in[k]$ and $B_1,\ldots,B_k$ are the only children of $B$.
\item \label{XH} If $B\stackrel{1}{\longrightarrow}B'$, then $B\stackrel{\tau}{\longrightarrow}B'$ and $B'$ is the only child of $B$.
\end{enumerate}
\end{definition}
Intuitively an $\epsilon$-tree of $A$ with regard to $\mathcal{E}$ is a random version of $\Longrightarrow_{\mathcal{E}}$.
All nodes of an $\epsilon$-tree with regard to $\mathcal{E}$ are equal from the viewpoint of $\mathcal{E}$.
Condition~\ref{xh} requires that if one of $B_1,\ldots,B_k$ is in the $\epsilon$-tree then all of $B_1,\ldots,B_k$ are in the $\epsilon$-tree, and $B\stackrel{q}{\longrightarrow}B'$ is $B\stackrel{p_i}{\longrightarrow}B_i$ for some $i\in I$.
This is nothing more than the intuition that $B\stackrel{\coprod_{i\in [k]}p_i\tau}{\longrightarrow}\coprod_{i\in[k]}B_i$ is conceptually a single silent transition.
The number of $\epsilon$-trees of $A$ with regard to an equivalence class are in general infinite.
Let's see some examples.

\begin{example}\label{ex-1}
Let $\Omega_a=\mu X.(\tau.a+\tau.X)$.
Let $\mathcal{E}_1$ be any equivalence that distinguishes a divergent process from a non-divergent one.
A finite $\epsilon$-tree of $\Omega_a$ with regard to $\mathcal{E}_1$ corresponds to a finite transition sequence of the form
$\Omega_a\stackrel{\tau}{\longrightarrow}\Omega_a\stackrel{\tau}{\longrightarrow}\ldots\stackrel{\tau}{\longrightarrow}\Omega_a$.
In the non-random case an $\epsilon$-tree with regard to $\mathcal{E}_1$ is just an instance of $\Longrightarrow_{\mathcal{E}_1}$.
There is an infinite $\epsilon$-tree of $\Omega_a$, corresponding to the divergent sequence $\Omega_a\stackrel{\tau}{\longrightarrow}\Omega_a\stackrel{\tau}{\longrightarrow}\ldots$.
\end{example}

\begin{example}\label{ex-2}
Let $\Omega_{\frac{1}{2}}=\mu X.(\frac{1}{2}\tau.X\oplus\frac{1}{2}\tau.X)$.
There are infinitely many $\epsilon$-trees of $a\,|\,\Omega_{\frac{1}{2}}$ with regard to any equivalence.
An $\epsilon$-tree may be a single node tree (the left diagram below), or a three node tree (the middle diagram below), or an infinite tree (the right diagram below).
Unlike Example~\ref{ex-1} the divergence in this case is immune from any intervention.
\begin{center}
\includegraphics[scale=0.5]{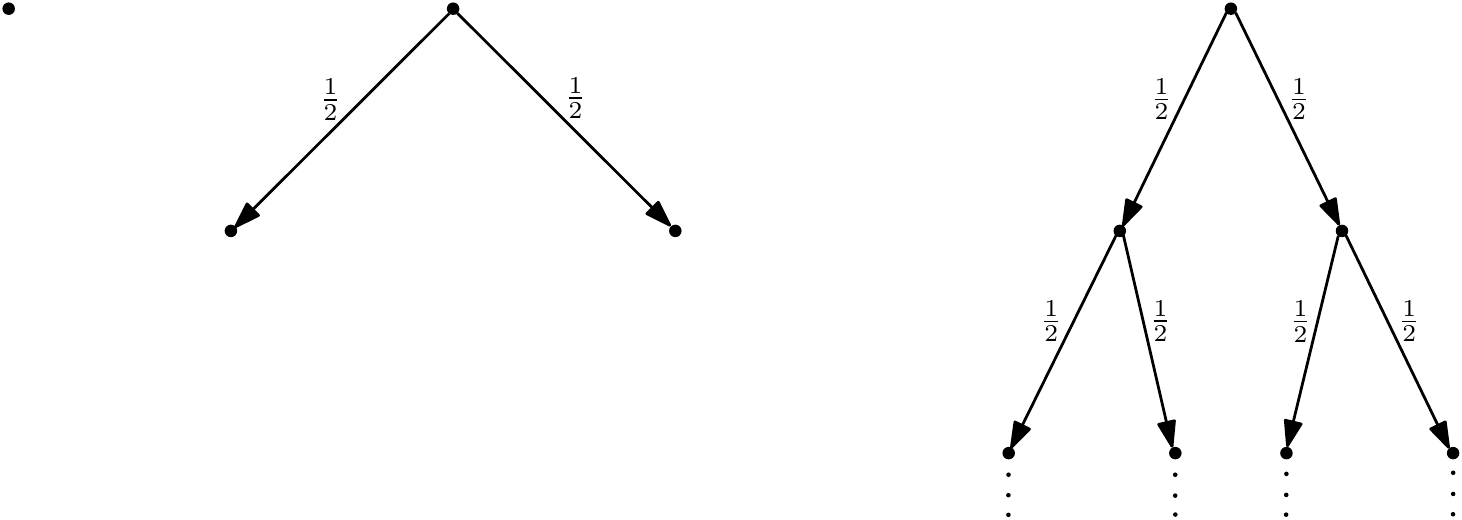}
\end{center}
\end{example}

\begin{example}\label{ex-3}
Let $\Omega_{\frac{1}{2}a}=\mu X.(\frac{1}{2}\tau.a\oplus\frac{1}{2}\tau.X)$.
Let $\mathcal{E}_2$ be an equivalence such that $[\Omega_{\frac{1}{2}a}]_{\mathcal{E}_2}=[a]_{\mathcal{E}_2}$.
A finite $\epsilon$-tree of $\Omega_{\frac{1}{2}a}$ with regard to $\mathcal{E}_2$ is described by the left diagram below, one of its leaves cannot do an immediate $a$ action.
The right diagram describes an infinite $\epsilon$-tree of $\Omega_{\frac{1}{2}a}$ with regard to $\mathcal{E}_2$, all of its leaves can do an immediate $a$ action.
\begin{center}
\includegraphics[scale=0.5]{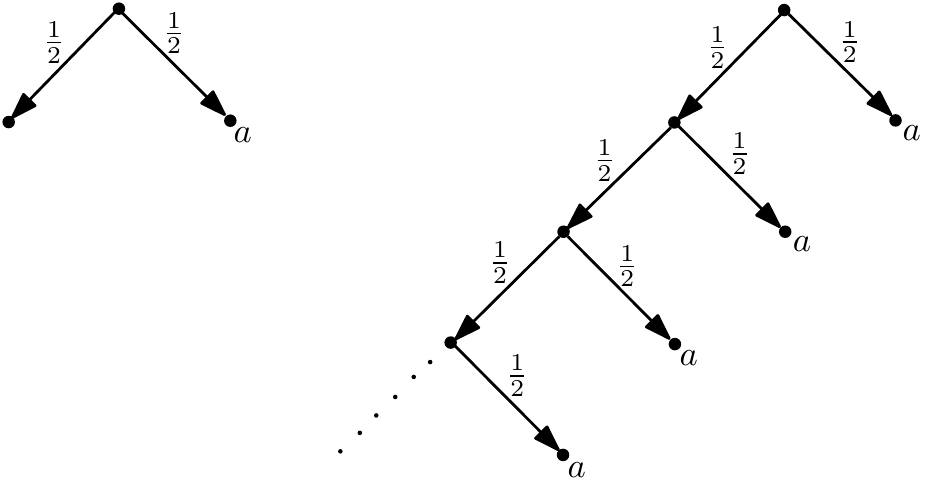}
\end{center}
\end{example}

\begin{example}\label{ex-4}
Let $G=\mu X.(\frac{1}{3}\tau.(a+\tau.X)\oplus\frac{2}{3}\tau.X)$.
Let $\mathcal{E}_3$ be any equivalence such that $[G]_{\mathcal{E}_3}=[a+\tau.G]_{\mathcal{E}_3}$.
Two $\epsilon$-trees of $G$ with regard to $\mathcal{E}_3$ are described by the following infinite diagrams.
Every leaf of the left diagram can do an immediate $a$ action, whereas none of the leaves of the right diagram can do an immediate $a$ action.
\begin{center}
\includegraphics[scale=0.5]{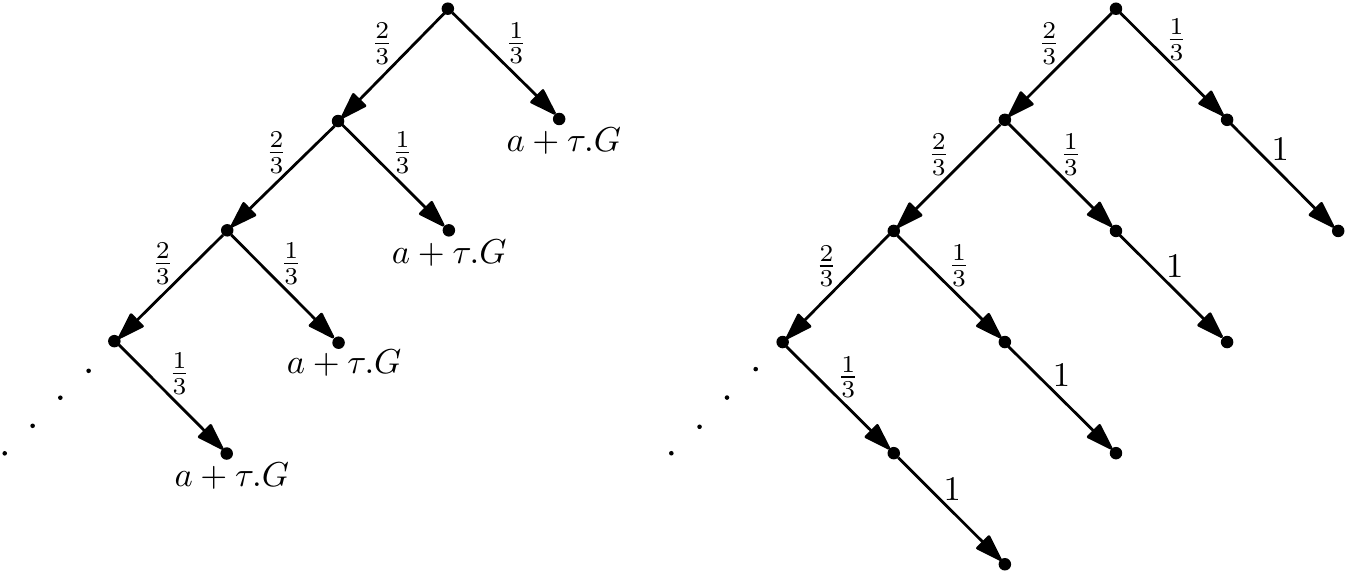}
\end{center}
\end{example}

\begin{example}\label{ex-5}
Let $H=\mu X.(\frac{1}{2}\tau.(a+\tau.X)\oplus\frac{1}{2}\tau.(b+\tau.X))$.
An $\epsilon$-tree of $H$ with regard to an equivalence $\mathcal{E}_4$ rendering true $[H]_{\mathcal{E}_4}=[a+\tau.H]_{\mathcal{E}_4}=[b+\tau.H]_{\mathcal{E}_4}$ is described by the left diagram below.
Every leaf of the $\epsilon$-tree can do an immediate $a$ action.
Another $\epsilon$-tree of $H$ with regard to $\mathcal{E}_4$ is described by the right diagram below, in which every leaf can do an immediate $b$ action.
\begin{center}
\includegraphics[scale=0.5]{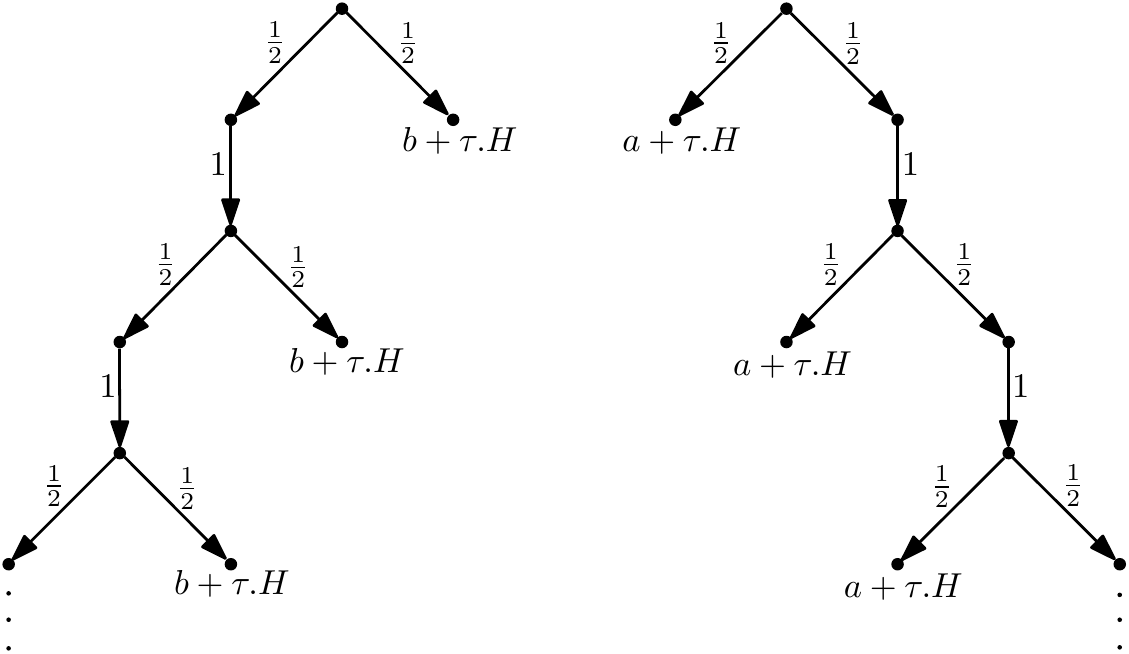}
\end{center}
\end{example}

These examples bring out a few observations.
Firstly $\epsilon$-trees are meant to generalize $\Longrightarrow_{\mathcal{E}}$.
This is clear from Example~\ref{ex-1}.
However $\epsilon$-trees are a little too general.
Two $\epsilon$-trees of a process may differ in that every leaf of one $\epsilon$-tree may do an immediate $a$ action whereas in the other this is not true.

To isolate the $\epsilon$-trees that truly correspond to $\Longrightarrow_{\mathcal{E}}$, we introduce some auxiliary  definitions.
A {\em path} in a silent tree $t$ is either a finite path going from the root to a node or an infinite path starting from the root.
A {\em branch} of $t$ is either a path ending in a leaf or an infinite path.
The length $|\pi|$ of a path $\pi$ is the number of edges in $\pi$ if $\pi$ is finite; it is $\omega$ otherwise.
For $i\le|\pi|$ let $\pi(i)$ be the label of the $i$-th edge.
The probability $\textsf{P}(\pi)$ of a finite path $\pi$ is $\prod\{\pi(i) \mid i\in[|\pi|]\}$.
A path of length zero is a single node, and its probability is $1$.
The probability of an infinite path $A\stackrel{p_1\tau}{\longrightarrow}\stackrel{p_2\tau}{\longrightarrow}\ldots\stackrel{p_k\tau}{\longrightarrow}\ldots$
is the limit of $p_1,p_1p_2,\ldots,\prod_{i\le k}p_i,\ldots$, whose existence is guaranteed because the decreasing sequence is bounded by $0$ from below.
It $t$ is finite, define $\textsf{P}(t)=\sum\{\textsf{P}(\pi)\mid \pi\ \mathrm{is}\ \mathrm{a}\ \mathrm{branch}\ \mathrm{of}\ t\}$.
If $t$ is infinite, we need to define the probability in terms of approximation.
Let $t{\upharpoonright}_k$ be the subtree of $t$ defined by the nodes of height no more than $k$.
Inductively
\begin{itemize}
\item $t{\upharpoonright}_0$ is induced by the root of $t$; and
\item $t{\upharpoonright}_{k+1}$ is induced by the nodes of $t{\upharpoonright}_k$ and all the children of these nodes.
\end{itemize}
It should be clear that $\textsf{P}(t{\upharpoonright}_{k+1})\le\textsf{P}(t{\upharpoonright}_k)$.
The probability $\textsf{P}(t)$ of the tree $t$ is defined by the limit $\lim_{k\rightarrow\infty}\textsf{P}(t{\upharpoonright}_k)$.
\begin{lemma}
$\textsf{P}(t)=1$ for every $\epsilon$-tree $t$.
\end{lemma}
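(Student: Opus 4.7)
The plan is to prove $\textsf{P}(t)=1$ first for finite $\epsilon$-trees by structural induction, then lift to the infinite case via the truncation-and-limit definition. The workhorse is a single conservation observation: the labels on the outgoing edges of any internal node of an $\epsilon$-tree sum to exactly~$1$.

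To justify the observation, take any node $B$ with at least one child in $t$. If some outgoing edge from $B$ carries a label $q\in(0,1)$, then clause~(\ref{xh}) of Definition~\ref{2019-02-03} forces the children of $B$ to be exactly $B_1,\ldots,B_k$ for some collective random transition $B\stackrel{\coprod_{i\in[k]}p_i\tau}{\longrightarrow}\coprod_{i\in[k]}B_i$ with matching edge labels $p_1,\ldots,p_k$, and these satisfy $\sum_{i\in[k]}p_i=1$ by the definition of the random-choice operator. Otherwise every outgoing edge from $B$ has label $1$, in which case clause~(\ref{XH}) makes $B$ have a single child, and the outgoing weights again sum to~$1$. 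A straightforward induction on height now settles the finite case: a height-zero tree consists of a single length-zero branch of probability~$1$, and for a taller finite tree $t$ with root children $B_1,\ldots,B_k$ (edges $p_1,\ldots,p_k$) and induced subtrees $t_1,\ldots,t_k$, every branch of $t$ decomposes as the edge to some $B_i$ followed by a branch of $t_i$, so $\textsf{P}(t)=\sum_{i\in[k]}p_i\textsf{P}(t_i)=1$ by the induction hypothesis.

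For infinite $t$, observe that the truncation $t{\upharpoonright}_k$ is a finite silent tree in which every node at depth strictly less than $k$ retains all of its children from $t$; this is immediate from the inductive definition of $t{\upharpoonright}_k$, which at each level adjoins \emph{all} children of the already-included nodes. Hence the same conservation argument applies verbatim to $t{\upharpoonright}_k$, yielding $\textsf{P}(t{\upharpoonright}_k)=1$ for every $k$, and the limit clause then gives $\textsf{P}(t)=\lim_{k\to\infty}\textsf{P}(t{\upharpoonright}_k)=1$. I do not foresee a substantive obstacle here, since the statement amounts to conservation of probability mass at each internal node and is built into Definition~\ref{2019-02-03}; the only mildly delicate point is to check that truncation does not expose an internal node whose children have been partially pruned, which is precisely what the above observation about $t{\upharpoonright}_k$ rules out.
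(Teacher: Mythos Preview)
Your proposal is correct and follows the same approach as the paper: the paper's entire proof is the single line ``$\textsf{P}(t{\upharpoonright}_{k})=1$ for all $k\ge0$,'' which is precisely the conservation-at-each-internal-node observation you spell out, together with the limit definition. Your treatment is simply a careful expansion of that one line, including the check that truncation never partially prunes a node's children; nothing is missing and nothing diverges from the paper's route.
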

\begin{proof}
$\textsf{P}(t{\upharpoonright}_{k})=1$ for all $k\ge0$.
\end{proof}
The probability of the finite branches of $t$ is defined by $\textsf{P}^f(t)=\lim_{k\rightarrow\infty}\textsf{P}^k(t)$, where
\begin{equation}\label{2019-01-13}
\textsf{P}^k(t)=\sum\left\{\textsf{P}(\pi) \mid \pi\ \mathrm{is}\ \mathrm{a}\ \mathrm{finite}\ \mathrm{branch}\ \mathrm{in}\ t\ \mathrm{such}\ \mathrm{that}\ |\pi|\le k\right\}.
\end{equation}

We are now in a position to generalize a branching bisimulation for CCS processes to a branching bisimulation for RCCS processes.
First of all we generalize state-preserving silent transition sequences of finite length.
Intuitively such a sequence turns into an $\epsilon$-tree that probabilistically contains no infinite branches.
\begin{definition}
An $\epsilon$-tree $t^A_{\mathcal{E}}$ is {\em regular} if $\textsf{P}^f(t^A_{\mathcal{E}})=1$.
\end{definition}
In the same line of thinking an $\epsilon$-tree is divergent if it has no finite branches.
\begin{definition}
An $\epsilon$-tree $t^A_{\mathcal{E}}$ is {\em divergent} if $\textsf{P}^f(t^A_{\mathcal{E}})=0$.
\end{definition}
The next definition is the probabilistic counterpart of Definition~\ref{2019-01-12}.
\begin{definition}
An equivalence $\mathcal{E}$ on $\mathcal{P}_{\mathrm{RCCS}}$ is {\em codivergent} if the following is valid:
\begin{itemize}
\item For every $\mathcal{C}\in\mathcal{P}/\mathcal{E}$, either all members of $\mathcal{C}$ have divergent $\epsilon$-trees with regard to $\mathcal{E}$, or no member of $\mathcal{C}$ has any divergent $\epsilon$-tree with regard to $\mathcal{E}$.
\end{itemize}
\end{definition}
To discuss the branching bisimulation for random processes, we need to talk about a transition from a process $A$ to an equivalence class $\mathcal{B}\in\mathcal{P}/\mathcal{E}$.
This makes sense because the processes in $\mathcal{B}$ are supposed to be all equal.
We would like to formalize the idea that after a finite number of state-preserving silent transitions an $\ell$-action is performed and the end processes are in $\mathcal{B}$.
Suppose $\ell\ne\tau\vee \mathcal{B}\ne[A]$.
An {\em $\ell$-transition from $A$ to $\mathcal{B}$ with regard to $\mathcal{E}$} consists of a regular $\epsilon$-tree $t_{\mathcal{E}}^A$ of $A$ with regard to $\mathcal{E}$ and a transition $L\stackrel{\ell}{\longrightarrow}L'\in\mathcal{B}$ for every leaf $L$ of $t_{\mathcal{E}}^A$.
We will write $A\rightsquigarrow_{\mathcal{E}}\stackrel{\ell}{\longrightarrow}\mathcal{B}$ if there is an $\ell$-transition from $A$ to $\mathcal{B}$ with regard to $\mathcal{E}$.
By definition $A\rightsquigarrow_{\mathcal{E}}\stackrel{\ell}{\longrightarrow}\mathcal{B}$ whenever $A\stackrel{\ell}{\longrightarrow}B\in\mathcal{B}$.

Let's see some examples.
For the process $\Omega_{\frac{1}{2}a}$ in Example~\ref{ex-3} one has $\Omega_{\frac{1}{2}a}\rightsquigarrow_{\mathcal{E}_2}\stackrel{a}{\longrightarrow}{\bf 0}$, where the regular $\epsilon$-tree is described by the right diagram in Example~\ref{ex-3}.
For the process $G$ in Example~\ref{ex-4} one has $G\rightsquigarrow_{\mathcal{E}_3}\stackrel{a}{\longrightarrow}{\bf 0}$, where the regular $\epsilon$-tree is described by the left diagram in Example~\ref{ex-4}.
For the process $H$ in Example~\ref{ex-5}, $H\rightsquigarrow_{\mathcal{E}_4}\stackrel{a}{\longrightarrow}{\bf 0}$ via the regular $\epsilon$-tree described by the left diagram, and $H\rightsquigarrow_{\mathcal{E}_4}\stackrel{b}{\longrightarrow}{\bf 0}$ via the regular $\epsilon$-tree described by the right diagram.

Now consider the situation where $A$ evolves into processes in $\mathcal{B}\in\left(\mathcal{P}/\mathcal{E}\right)\setminus\{[A]_{\mathcal{E}}\}$ with probability greater than $0$.
Suppose $L\stackrel{\coprod_{i\in [k]}p_i\tau}{\longrightarrow}\coprod_{i\in[k]}L_i$ such that $\exists i\in I.L_i\in\mathcal{B}$.
Define
\[\textsf{P}\left(L\stackrel{\coprod_{i\in [k]}p_i\tau}{\longrightarrow}\mathcal{B}\right)=\sum\left\{p_i\mid L\stackrel{p_i\tau}{\longrightarrow}L_i\in\mathcal{B} \wedge i\in I\right\}.\]
Define the {\em weighted} probability
\begin{equation}\label{2019-01-26}
\textsf{P}_{\mathcal{E}}\left(L\stackrel{\coprod_{i\in [k]}p_i\tau}{\longrightarrow}\mathcal{B}\right)=\textsf{P}\left(L\stackrel{\coprod_{i\in [k]}p_i\tau}{\longrightarrow}\mathcal{B}\right)\left/\left(1-\textsf{P}\left(L\stackrel{\coprod_{i\in [k]}p_i\tau}{\longrightarrow}[A]_{\mathcal{E}}\right)\right)\right..
\end{equation}
Intuitively~(\ref{2019-01-26}) is the probability that $L$ may leave the class $[A]_{\mathcal{E}}$ silently for elements of $\mathcal{B}$.
If one leaf of the regular $t_{\mathcal{E}}^A$ can do a silent transition that leaves $t_{\mathcal{E}}^A$ with a non-zero probability, we require that every leaf of $t_{\mathcal{E}}^A$ is capable of doing a silent transition that leaves $t_{\mathcal{E}}^A$ with that probability.
This probabilistic bisimulation property is observed in~\cite{BaierHermanns1997} in the simpler setting of the finite state fully probabilistic processes.
In our general setting a process may do several random silent transitions caused by different random combinators.
Suppose $\mathcal{B}\ne[A]$.
A {\em $q$-transition from $A$ to $\mathcal{B}$ with regard to $\mathcal{E}$} consists of a regular $\epsilon$-tree $t_{\mathcal{E}}^A$ of $A$ with regard to $\mathcal{E}$ and, for every leaf $L$ of $t_{\mathcal{E}}^A$, a collective silent transition $L\stackrel{\coprod_{i\in [k]}p_i\tau}{\longrightarrow}\coprod_{i\in[k]}L_i$ such that
\[\textsf{P}_{\mathcal{E}}\left(L\stackrel{\coprod_{i\in [k]}p_i\tau}{\longrightarrow}\mathcal{B}\right)=q.\]
We will write $A\rightsquigarrow_{\mathcal{E}}\stackrel{q}{\longrightarrow}\mathcal{B}$ if there is a $q$-transition from $A$ to $\mathcal{B}$ with regard to $\mathcal{E}$.
\begin{definition}
An equivalence $\mathcal{E}$ on $\mathcal{P}_{\mathrm{RCCS}}$ is a {\em branching bisimulation} if (\ref{ell-bi},\ref{ep-bi}) are valid.
\begin{enumerate}
\item \label{ell-bi}
If $B\,\mathcal{E}\,A\rightsquigarrow_{\mathcal{E}}\stackrel{\ell}{\longrightarrow}\mathcal{C}\in\mathcal{P}/\mathcal{E}$ such that $\ell\ne\tau\vee\mathcal{C}\not=[A]_{\mathcal{E}}$, then $B\rightsquigarrow_{\mathcal{E}}\stackrel{\ell}{\longrightarrow}\mathcal{C}$.
\item \label{ep-bi}
If $B\,\mathcal{E}\,A\rightsquigarrow_{\mathcal{E}}\stackrel{q}{\longrightarrow}\mathcal{C}\in\mathcal{P}/\mathcal{E}$ such that $\mathcal{C}\not=[A]_{\mathcal{E}}$, then $B\rightsquigarrow_{\mathcal{E}}\stackrel{q}{\longrightarrow}\mathcal{C}$.
\end{enumerate}
\end{definition}

Consider $\mu X.\left(a_1+\tau.(a_2+\tau.(\ldots(a_k+\tau.X)\ldots))\right)$.
The behaviour of the process can be pictured as a ring (the left diagram below), in which all nodes are equal~\cite{vanGlabbeekWeijland1996,FuYuxi2016}.
Consider a different process $\mu X.\left(\frac{1}{2}\tau.a_1\oplus\frac{1}{2}\tau.(\frac{1}{2}\tau.a_2\oplus\frac{1}{2}\tau.(\ldots(\frac{1}{2}\tau.a_k\oplus\frac{1}{2}\tau.X)\ldots))\right)$.
Its behaviour is pictured by the right diagram below.
No two nodes in the right ring can be in any branching bisimulation.
For example the top node in the ring can reach to the process $a_1$ with probability $1/2$, whereas the bottom node in the ring cannot reach to $a_1$ with probability $1/2$.
\begin{center}
\includegraphics[scale=0.75]{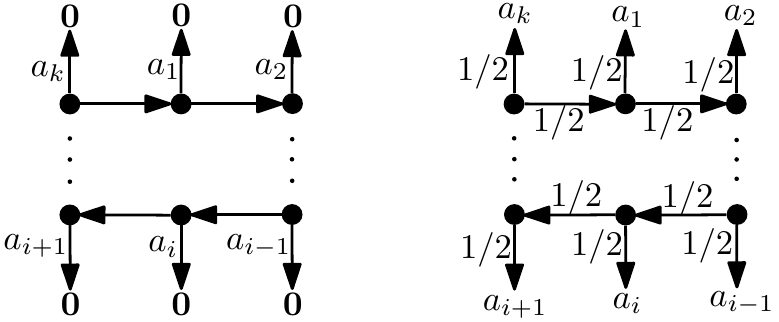}
\end{center}
The process $\Omega_{a}$ of Example~\ref{ex-1} and the process $\Omega_{\frac{1}{2}a}$ of Example~\ref{ex-3} cannot be in any codivergent branching bisimulation because the former is divergent whereas the latter is not.
For a relation $\mathcal{R}$ on $\mathcal{P}_{\mathrm{RCCS}}$, let $\mathcal{R}^{*}$ be the {\em equivalence closure} of $\mathcal{R}$.
Clearly $\left\{(\Omega_{\frac{1}{2}a},a)\right\}^{*}$ is a codivergent bisimulation.
And $\left\{(G,G_a)\right\}^{*}$ is a codivergent bisimulation, where $G$ is defined in Example~\ref{ex-4} and $G_a\stackrel{\rm def}{=}a+\tau.G$.
Also $\left\{(H,H_a),(H,H_b),(H,E)\right\}^{*}$ is a codivergent bisimulation, where $H$ is defined in Example~\ref{ex-5}, $H_a\stackrel{\rm def}{=}a+\tau.H$, $H_b\stackrel{\rm def}{=}b+\tau.H$ and $E\stackrel{\rm def}{=}\mu X.(a+b+\tau.X)$.

\section{Equality for Random Process}\label{Equality4RCCS}

The following lemma follows immediately from definition.
\begin{lemma}\label{2019-02-01}
If $\mathcal{E}_i$ is a codivergent equivalence for every $i\in I$, then so is $\left(\bigcup_{i\in I}\mathcal{E}_i\right)^{*}$.
\end{lemma}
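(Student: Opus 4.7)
Write $\mathcal{E}$ for $(\bigcup_{i\in I}\mathcal{E}_i)^{*}$ and fix an $\mathcal{E}$-class $\mathcal{C}\in\mathcal{P}_{\mathrm{RCCS}}/\mathcal{E}$ with two elements $A,B$. By definition of the equivalence closure, $A$ and $B$ are linked by a finite chain $A=C_0,C_1,\ldots,C_n=B$ where for each $j<n$ some index $i_j\in I$ satisfies $C_j\,\mathcal{E}_{i_j}\,C_{j+1}$. The plan is to propagate divergence status along this chain, so by induction on $n$ it suffices to prove the one-link case: if $A\,\mathcal{E}_i\,B$, then $A$ has a divergent $\epsilon$-tree with regard to $\mathcal{E}$ iff $B$ does.

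For the one-link step, observe that $\mathcal{E}_i\subseteq\mathcal{E}$ entails $[A]_{\mathcal{E}_i}\subseteq[A]_\mathcal{E}$, so any $\mathcal{E}_i$-tree is automatically an $\mathcal{E}$-tree, and divergence (being the condition $\textsf{P}^f(t)=0$, which refers only to the tree $t$ itself) is preserved under this inclusion. Codivergence of $\mathcal{E}_i$, applied to the class $[A]_{\mathcal{E}_i}$ that contains both $A$ and $B$, gives that $A$ has a divergent $\mathcal{E}_i$-tree iff $B$ does, and each such tree is simultaneously a divergent $\mathcal{E}$-tree of the corresponding process. Combined with the chain reduction, this delivers codivergence of $\mathcal{E}$.

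The delicate point I would want to pin down is the converse passage, from ``divergent $\epsilon$-tree with regard to $\mathcal{E}$'' back to ``divergent $\epsilon$-tree with regard to $\mathcal{E}_i$'' for some $i$ linking the two processes, so that codivergence of $\mathcal{E}_i$ can actually be invoked. The most robust workaround is to characterise divergence as the existence of a non-empty certificate set $S\subseteq[A]_\mathcal{F}$ with $A\in S$ such that every $D\in S$ possesses either a single silent transition into $S$ or a whole collective silent transition all of whose targets lie in $S$; once divergence has been reduced to such a local condition on transitions, the chain argument becomes uniform across the $\mathcal{E}_i$'s and the lemma, as the author remarks, ``follows immediately from definition''.
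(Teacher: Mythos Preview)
The gap you flag in your final paragraph is real and, in fact, cannot be closed: the lemma as stated is false. Take RCCS processes $A,C$ with $A\stackrel{\tau}{\longrightarrow}C$, $C\stackrel{\tau}{\longrightarrow}A$ and no other transitions (concretely $A=(c)(P\,|\,Q)$ and $C=(c)(c.P\,|\,\overline{c}.Q)$ where $P=\mu X.\overline{c}.c.X$ and $Q=\mu Y.c.\overline{c}.Y$), and let $B={\bf 0}$. Set $\mathcal{E}_1=\{(A,B)\}^{*}$ and $\mathcal{E}_2=\{(B,C)\}^{*}$. Each $\mathcal{E}_i$ is codivergent: in the $\mathcal{E}_1$-class $\{A,B\}$ the process $A$ can only step to $C\notin\{A,B\}$, so its unique $\epsilon$-tree with regard to $\mathcal{E}_1$ is the single node, and likewise for every other class of either relation. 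But in $\mathcal{E}=(\mathcal{E}_1\cup\mathcal{E}_2)^{*}$ the class $\{A,B,C\}$ contains $A$, which admits the divergent $\epsilon$-tree $A\stackrel{\tau}{\longrightarrow}C\stackrel{\tau}{\longrightarrow}A\stackrel{\tau}{\longrightarrow}\cdots$, while $B$ admits none. Hence $\mathcal{E}$ is not codivergent.

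Your certificate-set reformulation does not escape this: the witnessing set $S=\{A,C\}$ sits inside $[A]_{\mathcal{E}}$ but inside neither $[A]_{\mathcal{E}_1}=\{A,B\}$ nor $[A]_{\mathcal{E}_2}=\{A\}$, so codivergence of the individual $\mathcal{E}_i$ gives no purchase on it. The paper itself offers nothing beyond ``follows immediately from definition''. What the development actually \emph{needs}, and what one expects does hold, is the combined statement that the equivalence closure of a family of codivergent \emph{branching bisimulations} is again codivergent; the bisimulation hypothesis is exactly what lets one transport a divergent $\mathcal{E}$-tree across a single $\mathcal{E}_i$-link by simulating each of its edges. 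In the counterexample above $\mathcal{E}_1$ fails to be a branching bisimulation precisely because $A\stackrel{\tau}{\longrightarrow}C$ cannot be matched from $B$.
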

The proof of the next fact is slightly complicated but standard.
\begin{proposition}\label{2018-12-18}
If $\mathcal{E}_i$ is a branching bisimulation for every $i\in I$, then so is $\left(\bigcup_{i\in I}\mathcal{E}_i\right)^{*}$.
\end{proposition}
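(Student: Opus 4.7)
The plan is to prove the proposition by induction on the length of the chain witnessing the equivalence closure. Write $\mathcal{E}:=\left(\bigcup_{i\in I}\mathcal{E}_i\right)^{*}$. Since each $\mathcal{E}_i$ is symmetric and reflexive, $B\,\mathcal{E}\,A$ is equivalent to the existence of a sequence $B=C_0,\ldots,C_n=A$ with $C_{k-1}\,\mathcal{E}_{j_k}\,C_k$ for each $k$ and some $j_k\in I$. The base case $n=0$ is immediate, and the induction reduces both bisimulation clauses to a single-step lemma: for every fixed $i\in I$, if $B'\,\mathcal{E}_i\,A$ and $A\rightsquigarrow_{\mathcal{E}}\stackrel{\ell}{\longrightarrow}\mathcal{C}$ (respectively $A\rightsquigarrow_{\mathcal{E}}\stackrel{q}{\longrightarrow}\mathcal{C}$) subject to the relevant side condition, then $B'\rightsquigarrow_{\mathcal{E}}\stackrel{\ell}{\longrightarrow}\mathcal{C}$ (respectively $B'\rightsquigarrow_{\mathcal{E}}\stackrel{q}{\longrightarrow}\mathcal{C}$). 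The decisive general fact is the containment $\mathcal{E}_i\subseteq\mathcal{E}$, which yields $[X]_{\mathcal{E}_i}\subseteq[X]_{\mathcal{E}}$ for every $X$: consequently every regular $\epsilon$-tree with regard to $\mathcal{E}_i$ is also a regular $\epsilon$-tree with regard to $\mathcal{E}$, and any matching move $\rightsquigarrow_{\mathcal{E}_i}\stackrel{\ell}{\longrightarrow}\mathcal{D}$ already witnesses $\rightsquigarrow_{\mathcal{E}}\stackrel{\ell}{\longrightarrow}\mathcal{C}$ whenever $\mathcal{D}\subseteq\mathcal{C}$.

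For the single-step lemma I would traverse the witnessing tree $t^A_{\mathcal{E}}$ from its root and recursively synthesise a matching tree from $B'$. At any current match point $Y\,\mathcal{E}_i\,X$ with $X$ a node of $t^A_{\mathcal{E}}$, inspect the $\mathcal{E}_i$-status of $X$'s outgoing collective silent transition: each child $X_j$ of $X$ is either $\mathcal{E}_i$-state-preserving (when $X\,\mathcal{E}_i\,X_j$) or $\mathcal{E}_i$-state-changing (otherwise). Only the state-changing children, together with the terminal $\ell$-actions or class-exiting collective silent transitions demanded by the conclusion, need to be matched from $Y$; the bisimulation property of $\mathcal{E}_i$ supplies, for each such move, a regular $\epsilon$-tree with regard to $\mathcal{E}_i$ from $Y$ ending in the required move into the appropriate $\mathcal{E}_i$-class. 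Splice these matched pieces at their shared endpoints and recurse on every state-preserving $X_j$ with the same match point $Y$; the result is a regular $\epsilon$-tree from $B'$ realising $B'\rightsquigarrow_{\mathcal{E}}\stackrel{\ell}{\longrightarrow}\mathcal{C}$ or $B'\rightsquigarrow_{\mathcal{E}}\stackrel{q}{\longrightarrow}\mathcal{C}$.

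The main obstacle will be the probabilistic bookkeeping in clause~(\ref{ep-bi}) and the control of a possibly infinite $t^A_{\mathcal{E}}$. Regularity of the synthesised tree must be checked through the approximation $\textsf{P}^k$ of~(\ref{2019-01-13}): one argues by induction on $k$ that the $\mathcal{E}_i$-matches contribute at least the same finite-branch mass as their $A$-side counterparts. For a $q$-transition both $\mathcal{C}$ and $[A]_{\mathcal{E}}$ decompose as disjoint unions of $\mathcal{E}_i$-classes, so the collective silent transitions produced by $\mathcal{E}_i$'s bisimulation at each leaf of the $B'$-side tree must be aggregated and the denominator in~(\ref{2019-01-26}) renormalised from a normalisation over $[A]_{\mathcal{E}_i}$ to one over $[A]_{\mathcal{E}}$; this rescaling yields exactly the weighted probability $q$ into $\mathcal{C}$. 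The side conditions $\ell\ne\tau\vee\mathcal{C}\ne[A]_{\mathcal{E}}$ and $\mathcal{C}\ne[A]_{\mathcal{E}}$ depend only on the $\mathcal{E}$-class of $A$ and so survive every step of the chain since $[B]_{\mathcal{E}}=[A]_{\mathcal{E}}$. Once the single-step lemma is established, the outer induction delivers both clauses and shows $\mathcal{E}$ is a branching bisimulation.
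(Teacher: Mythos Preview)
Your approach is essentially the paper's: reduce to a single-step along the chain $A_0\,\mathcal{E}_{i_1}\,A_1\,\ldots\,\mathcal{E}_{i_k}\,A_k$, then construct the matching $\epsilon$-tree from $A_1$ by structural recursion on $t_{A_0}$, using the branching-bisimulation property of $\mathcal{E}_{i_1}$ to replace each state-changing edge (and each terminal $\ell$-move or class-leaving collective transition) by a regular $\mathcal{E}_{i_1}$-tree, and the inclusion $\mathcal{E}_{i_1}\subseteq\mathcal{E}$ to see that the glued tree is an $\epsilon$-tree with regard to $\mathcal{E}$.

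There is one point where your sketch is too loose. You propose to verify regularity of the synthesised tree ``by induction on $k$ \dots the $\mathcal{E}_i$-matches contribute at least the same finite-branch mass as their $A$-side counterparts.'' That does not work as stated: each edge of $t^A_{\mathcal{E}}$ may be replaced by an $\mathcal{E}_i$-tree of unbounded depth, so $\textsf{P}^k$ of the $B'$-side tree need not dominate $\textsf{P}^k$ of $t^A_{\mathcal{E}}$ level by level. The paper's argument is the one you need: given $\varepsilon>0$, first choose $K_\varepsilon$ with $1-\textsf{P}^{K_\varepsilon}(t_{A_0})<\varepsilon/2$, so only finitely many edges of $t_{A_0}{\upharpoonright}K_\varepsilon$ need matching; then choose a single $N_\varepsilon$ so that every matching $\mathcal{E}_{i_1}$-tree $t$ satisfies $1-\textsf{P}^{N_\varepsilon}(t)<\varepsilon/(2K_\varepsilon)$, whence $1-\textsf{P}^{K_\varepsilon N_\varepsilon}(t_{A_1})<\varepsilon$. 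Your handling of clause~(\ref{ep-bi}) and the renormalisation of~(\ref{2019-01-26}) is at the same level of detail as the paper's (which simply remarks that the $q$-case ``can be treated in the same fashion''), so no further objection there.
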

\begin{proof}
Let $\mathcal{E}=\left(\bigcup_{i\in I}\mathcal{E}_i\right)^{*}$.
Assume that $(A_0,A_k)\in\mathcal{E}$ is due to $A_0\mathcal{E}_{i_1}A_1\mathcal{E}_{i_2}\ldots\mathcal{E}_{i_k}A_k$ for some $A_1,\ldots,A_{k-1}$.
Let $\mathcal{C}\in\mathcal{P}_{\mathrm{RCCS}}/\mathcal{E}$ such that $(\ell,\mathcal{C})\ne(\tau,[A_0]_{\mathcal{E}})$.
By definition there must be a family of pairwise disjoint equivalence classes $\left\{\mathcal{C}^{i_1}_j\right\}_{j\in J}$ of $\mathcal{E}_{i_1}$ such that
$\mathcal{C} = \bigcup_{j\in J}\mathcal{C}^{i_1}_{j}$.
Consider an $\ell$-transition $A_0\rightsquigarrow_{\mathcal{E}}\stackrel{\ell}{\longrightarrow}\mathcal{C}$.
It consists of an $\epsilon$-tree $t_{A_0}$ of $A_0$ with regard to $\mathcal{E}$ and, for every leaf $L$ of $t_{A_0}$, a transition $L\stackrel{\ell}{\longrightarrow}L'\in\mathcal{C}$.
We construct by induction on the structure of $t_{A_0}$ an $\ell$-transition $A_1\rightsquigarrow_{\mathcal{E}}\stackrel{\ell}{\longrightarrow}\mathcal{C}$.
The basic idea is to construct an $\epsilon$-tree, whose nodes are all in $[A_1]_{\mathcal{E}}$, for every edge of $t_{A_0}$.
By sticking these $\epsilon$-trees together we get an $\epsilon$-tree $t_{A_1}$ of $A_1$ with regard to $\mathcal{E}$.
Formally the bisimulation $A_1\rightsquigarrow_{\mathcal{E}}\stackrel{\ell}{\longrightarrow}\mathcal{C}$ can be derived by induction.
\begin{itemize}
\item
The root of $t_{A_0}$ has only one child $A_0'$.
By definition the edge from $A_0$ to $A_0'$ is labeled by $1$.
If $A_0'\in[A_0]_{\mathcal{E}_{i_1}}$, we construct $t_{A_1}$ by structural induction on the $\epsilon$-tree of $A_0'$.
If $A_0'\notin[A_0]_{\mathcal{E}_{i_1}}$ then $A_0\stackrel{\tau}{\longrightarrow}A_0'$ is bisimulated by some $\tau$-transition $A_1\rightsquigarrow_{\mathcal{E}_{i_1}}\stackrel{\tau}{\longrightarrow}[A_0']_{\mathcal{E}_{i_1}}$ consisting of an $\epsilon$-tree $t_{A_1}'$ of $A_1$ with regard to $\mathcal{E}_{i_1}$ and, for every leaf $B$ of $t_{A_1}'$, a transition $B\stackrel{\tau}{\longrightarrow}B'\mathcal{E}_{i_1}A_0'$ for some $B'$.
We then continue to construct an $\epsilon$-tree for each $B'$ by induction on the structure of the $\epsilon$-tree of $A_0'$.
\begin{figure*}[t]
\begin{center}
\includegraphics[scale=0.75]{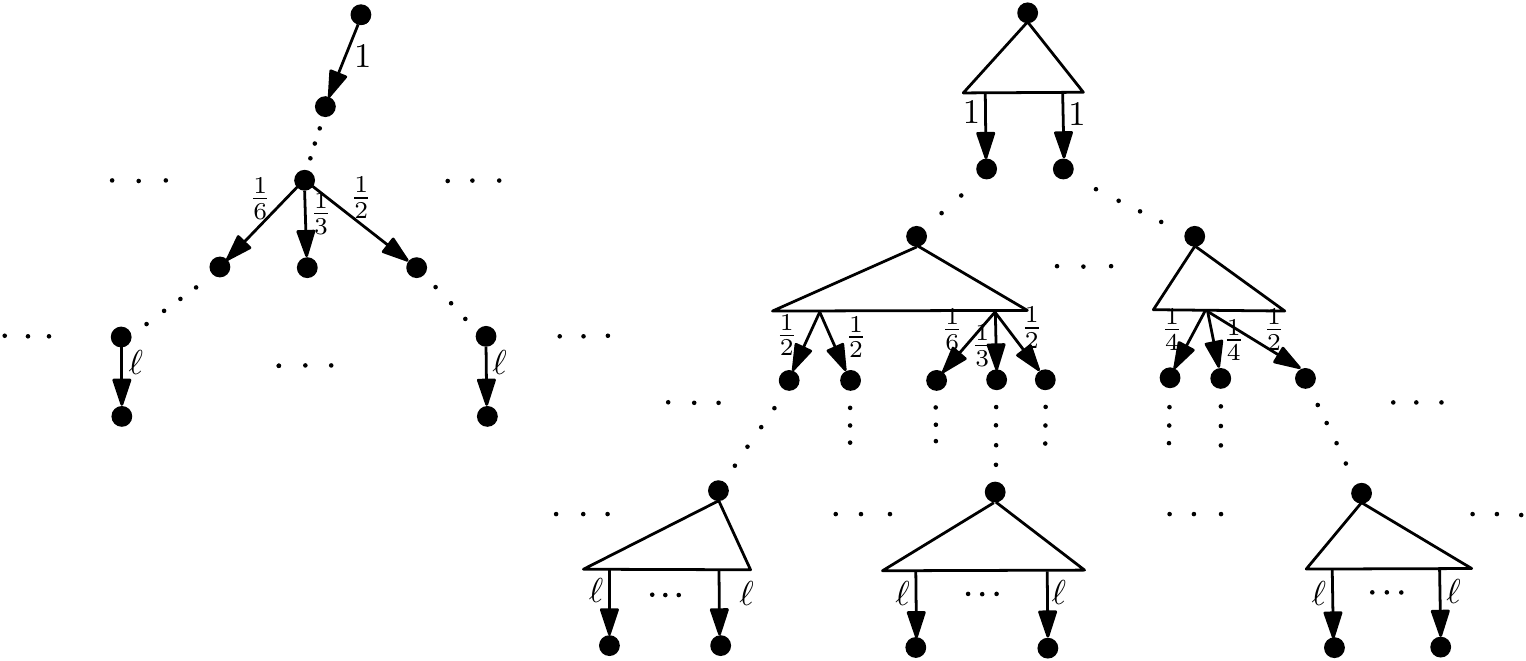}
\end{center}
\caption{Stepwise Bisimulation.}
\label{bi-tree}
\end{figure*}
\item
The root of $t_{A_0}$ has $h$ children $A_0^1,\ldots,A_0^h$ with the corresponding edges labeled by $p_1,\ldots,p_h$ respectively.
By definition
\[A_0\stackrel{\coprod_{j\in [h]}p_j\tau}{\longrightarrow}\coprod_{j\in [h]}A_0^j.\]
There are two cases.
In the first case $A_0^j\mathcal{E}_{i_1}A_0$ for all $j\in[k]$.
We construct $t_{A_1}'$ by structural induction on the $\epsilon$-tree of say $A_0^1$.
In the second case suppose without loss of generality that $A_0^1\notin[A_0]_{\mathcal{E}_{i_1}}$.
Let $q=\textsf{P}_{\mathcal{E}_{i_1}}\left(A_0\stackrel{\coprod_{i\in [h]}p_i\tau}{\longrightarrow}[A_0^1]_{\mathcal{E}_{i_1}}\right)$.
Then $A_1\rightsquigarrow_{\mathcal{E}_{i_1}}\stackrel{q}{\longrightarrow}[A_0']_{\mathcal{E}_{i_1}}$ by definition.
The $q$-transition consists of a regular $\epsilon$-tree $t_{A_1}'$ of $A_1$ with regard to $\mathcal{E}_{i_1}$ and, for each leaf $N$ of $t_{A_1}'$, a collective silent transition $N\stackrel{\coprod_{i'\in [h']}p_{i'}\tau}{\longrightarrow}\coprod_{i'\in[h']}N_{i'}$ such that
\[\textsf{P}_{\mathcal{E}_{i_1}}\left(N\stackrel{\coprod_{i'\in [h']}p_i\tau}{\longrightarrow}[A_0']_{\mathcal{E}_{i_1}}\right)=q.\]
For every process $N_{i''}$ the $q$-transition $A_1\rightsquigarrow_{\mathcal{E}_{i_1}}\stackrel{q}{\longrightarrow}[A_0']_{\mathcal{E}_{i_1}}$ reaches, we continue to construct an $\epsilon$-tree of $N_{i''}$ by induction on the structure of $A_0^1$.
\item
The root of $t_{A_0}$ does the transition $A_0\stackrel{\ell}{\longrightarrow}L'$.
Then $A_1\rightsquigarrow_{\mathcal{E}_{i_1}}\stackrel{\ell}{\longrightarrow}[L']_{\mathcal{E}_{i_1}}$ by definition.
\end{itemize}
In Figure~\ref{bi-tree} the left is a diagram for $A_0\rightsquigarrow_{\mathcal{E}}\stackrel{\ell}{\longrightarrow}\mathcal{C}$, while the right is a diagram for the stepwise bisimulation $A_1\rightsquigarrow_{\mathcal{E}}\stackrel{\ell}{\longrightarrow}\mathcal{C}$.
The above itemized cases are described by the upper, middle, and bottom parts of the diagrams respectively.
We still need to verify the regularity property.
Given $\varepsilon\in(0,1)$, there is a number $K_{\varepsilon}$ such that $1-\textsf{P}^{K_{\varepsilon}}(t_{A_0})<\varepsilon/2$.
Now every edge in $t_{A_0}{\upharpoonright}K_{\varepsilon}$ is bisimulated either vacuously or by an $\epsilon$-tree $t$.
There is a number $N_{\varepsilon}$ such that for every such $\epsilon$-tree $t$ it holds that $1-\textsf{P}^{N_{\varepsilon}}(t)<\frac{\varepsilon}{2K_{\varepsilon}}$.
It is not difficult to see that $1-\textsf{P}^{K_{\varepsilon}N_{\varepsilon}}(t_{A_1})<\varepsilon/2+\varepsilon/2=\varepsilon$.
Therefore $t_{A_1}$ is regular.
So $A_0\rightsquigarrow_{\mathcal{E}}\stackrel{\ell}{\longrightarrow}\mathcal{C}$ is bisimulated by $A_1\rightsquigarrow_{\mathcal{E}}\stackrel{\ell}{\longrightarrow}\mathcal{C}$.
For the same reason $A_1\rightsquigarrow_{\mathcal{E}}\stackrel{\ell}{\longrightarrow}\mathcal{C}$ is bisimulated by some $A_2\rightsquigarrow_{\mathcal{E}}\stackrel{\ell}{\longrightarrow}\mathcal{C}$.
We are done by induction.

We should also consider transitions of the form $A_0\rightsquigarrow_{\mathcal{E}}\stackrel{p}{\longrightarrow}\mathcal{C}$ for some $p\in(0,1)$, which can be treated in the same fashion.
\end{proof}

Proposition~\ref{2018-12-18} is reassuring.
We may now define the {\em equality on RCCS processes}, denoted by $=_{\mathrm{RCCS}}$, as the largest codivergent branching bisimulation on $\mathcal{P}_{\mathrm{RCCS}}$.
We abbreviate $=_{\mathrm{RCCS}}$ to $=$ in the rest of the section.

\begin{theorem}
The equality $=_{\mathrm{RCCS}}$ is a congruence.
\end{theorem}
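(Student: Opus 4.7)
The plan is to establish, separately for each process combinator $\mathrm{op}$ of RCCS, that the relation $\mathcal{R}_{\mathrm{op}}$ obtained by placing $=_{\mathrm{RCCS}}$-equivalent arguments inside $\mathrm{op}$ is a codivergent branching bisimulation. Maximality of $=_{\mathrm{RCCS}}$ then forces $\mathcal{R}_{\mathrm{op}} \subseteq\, =_{\mathrm{RCCS}}$, which is the congruence property for $\mathrm{op}$. Lemma~\ref{2019-02-01} and Proposition~\ref{2018-12-18} permit me to treat the combinators independently and combine the resulting relations.

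For prefix, nondeterministic sum, random choice, and localization the argument follows the standard CCS template. Every transition of the compound term is derived from a transition of exactly one subterm, so the bisimulation clauses of $=_{\mathrm{RCCS}}$ produce a regular $\epsilon$-tree and a matching $\ell$- or $q$-move which is lifted back through the operator by reapplying the relevant SOS rule at every node; codivergence transfers because a divergent $\epsilon$-tree of the compound term decomposes into divergent $\epsilon$-trees of its components. For random choice in particular, when $A_i = B_i$ for all $i\in I$ the one-step transitions of $\bigoplus_{i\in I} p_i\tau.A_i$ land in the same $=_{\mathrm{RCCS}}$-classes as those of $\bigoplus_{i\in I} p_i\tau.B_i$, so the weighted probabilities defined in~(\ref{2019-01-26}) agree class by class, and both the $\ell$- and $q$-clauses are witnessed by single-step regular $\epsilon$-trees.

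The substantive case is parallel composition. I take $\mathcal{R}_{|} = \{\,(A \,|\, B,\ A' \,|\, B') \mid A = A' \wedge B = B'\,\}$ and close it under the operators already handled. A transition of $A \,|\, B$ is either a component move from $A$, a component move from $B$, or a $\tau$-interaction pairing $A \stackrel{\alpha}{\longrightarrow} A''$ with $B \stackrel{\overline{\alpha}}{\longrightarrow} B''$. The first two are simulated by importing the regular $\epsilon$-tree produced by $=_{\mathrm{RCCS}}$ on the active component and leaving the other idle. The interaction case is handled by splicing the two regular $\epsilon$-trees supplied by the two single-sided bisimulations into a joint regular $\epsilon$-tree of $A' \,|\, B'$ whose leaves perform matching $\alpha$- and $\overline{\alpha}$-moves that synchronise into $\tau$. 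The main obstacle I anticipate is the codivergence condition: divergence of $A \,|\, B$ can arise from within $A$, from within $B$, or from an infinite chain of interactions, and the approximation bound on $1 - \textsf{P}^k$ must be controlled across both components, for which I expect to need a two-level $\varepsilon/2$ argument in the style of the regularity verification at the end of the proof of Proposition~\ref{2018-12-18}.

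For the fixpoint operator I take $\mathcal{R}_{\mu}$ to relate $\mu X.T$ with $\mu X.T'$ whenever $T = T'$ holds under every closed $=_{\mathrm{RCCS}}$-respecting substitution for the free variables of $T$ and $T'$, and I close this relation under the operators already verified. Verification proceeds by induction on the SOS derivation of a transition of $\mu X.T$: one unfolds once to $T\{\mu X.T/X\}$ and invokes congruence for the operators appearing inside $T$, with guardedness of $X$ ensuring that every finite derivation bottoms out in a base bisimulation clause. Combining all $\mathcal{R}_{\mathrm{op}}$ via Lemma~\ref{2019-02-01} and Proposition~\ref{2018-12-18} and appealing to maximality of $=_{\mathrm{RCCS}}$ once more completes the proof.
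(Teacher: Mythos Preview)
Your overall strategy---build a candidate relation for each operator, verify it is a codivergent branching bisimulation, and invoke maximality---matches the paper's. But you gloss over the point where the real work sits, and you locate the difficulty in the wrong place.

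The issue is equivalence-class collapse under context. When you write that localization ``follows the standard CCS template'' with transitions ``lifted back through the operator by reapplying the relevant SOS rule at every node'', you are assuming that an $\epsilon$-tree of $(a)A$ with respect to $\mathcal{S}^{\circ}$ projects to an $\epsilon$-tree of $A$ with respect to $=$. It need not: an edge $A'\stackrel{\tau}{\longrightarrow}A''$ can be state-changing for $=$ even though $(a)A'\stackrel{\tau}{\longrightarrow}(a)A''$ is state-preserving for $\mathcal{S}^{\circ}$. The paper singles out exactly this point and handles it by bisimulating each such edge individually on the $B$ side and splicing the resulting trees, reusing the inductive construction from Proposition~\ref{2018-12-18}. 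Your proposal does not account for this, and without it the localization case does not go through.

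The same phenomenon bites in the $q$-clause for parallel composition, which you do not mention at all. If $A\stackrel{\coprod_{i}p_i\tau}{\longrightarrow}\coprod_{i}A_i$ with $A_1\neq A\neq A_2\neq A_1$ but $A_1\,|\,C = A_2\,|\,C$, then the weighted probability into $[A_1\,|\,C]$ is $q_1+q_2$, and the $B\,|\,D$ side must exhibit precisely this sum; the paper checks this explicitly. Your treatment of parallel covers only the $\ell$-clause. By contrast the codivergence condition, which you flag as the ``main obstacle'', the paper dispatches in one line as ``similar''; once the edge-by-edge splicing machinery is in place it really is routine. (The paper, incidentally, does not treat the fixpoint operator separately in its proof.)
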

\begin{proof}
It is easy to see that $=$ is closed under both the nondeterministic choice operation and the random choice operation.
Consider $\mathcal{R} \stackrel{\rm def}{=} \{(A\,|\,C,B\,|\,D) \mid A=B\wedge C=D\}$.
We prove that $\mathcal{R}^{\circ}\stackrel{\rm def}{=}\left(\mathcal{R}\;\cup=\right)^{*}$ is a codivergent branching bisimulation.
Suppose $A\,|\,C\;\mathcal{R}\;B\,|\,D\;=\ldots\;\mathcal{R}\;=E\,|\,F$
and $A\,|\,C\rightsquigarrow_{\mathcal{R}^{\circ}}\stackrel{\ell}{\longrightarrow}\mathcal{C}$ for some equivalence class $\mathcal{C}\in\mathcal{P}/\mathcal{R}^{\circ}$ such that $\ell\ne\tau\vee \mathcal{C}\ne[A\,|\,C]_{\mathcal{R}^{\circ}}$.
Let $t_{A\,|\,C}$ denote the $\epsilon$-tree of $A\,|\,C$ in the $\ell$-transition.
Using the technique explained in the proof of Proposition~\ref{2018-12-18} it is routine to build up an $\ell$-transition $B\,|\,D\rightsquigarrow_{\mathcal{R}^{\circ}}\stackrel{\ell}{\longrightarrow}\mathcal{C}$ that bisimulates $A\,|\,C\rightsquigarrow_{\mathcal{R}^{\circ}}\stackrel{\ell}{\longrightarrow}\mathcal{C}$.
This is inductively described as follows.
\begin{itemize}
\item
An edge from $A\,|\,C$ to $A'\,|\,C$ labeled $1$ is caused by a transition $A\stackrel{\tau}{\longrightarrow}_{=}A'$.
In this case $A'\,|\,C\;\mathcal{R}\;B\,|\,D$.
If it is caused by $A\stackrel{\coprod_{i\in I}p_i\tau}{\longrightarrow}\coprod_{i\in I}A_i$ such that $A_i=A$ for all $i\in I$, then obviously $A_i\,|\,C\;\mathcal{R}\;B\,|\,D$ for each $i\in I$.
\item
An edge from $A\,|\,C$ to $A'\,|\,C$ labeled $1$ is caused by a transition $A\stackrel{\tau}{\longrightarrow}A'\not= A$.
Then $B\rightsquigarrow_{=}\stackrel{\tau}{\longrightarrow}[A']_{=}$.
It should be clear that $B\,|\,D\rightsquigarrow_{\mathcal{R}}\stackrel{\tau}{\longrightarrow}[A'\,|\,C]_{\mathcal{R}}$.
\item
Suppose $A\stackrel{\coprod_{i\in[k]}p_i\tau}{\longrightarrow}\coprod_{i\in[k]}A_i$ and $A_1\ne A\ne A_2\ne A_1$ and $A_1\,|\,C=A_2\,|\,C\ne A\,|\,C$.
Define $q_1\stackrel{\rm def}{=}\textsf{P}_{=}\left(A\stackrel{\coprod_{i\in [k]}p_i\tau}{\longrightarrow}[A_1]_{=}\right)$ and $q_2\stackrel{\rm def}{=}\textsf{P}_{=}\left(A\stackrel{\coprod_{i\in [k]}p_i\tau}{\longrightarrow}[A_2]_{=}\right)$ and $q\stackrel{\rm def}{=}q_1+q_2$.
Then $q$ is equal to $\textsf{P}_{=}\left(A\,|\,C\stackrel{\coprod_{i\in [k]}p_i\tau}{\longrightarrow}[A_1\,|\,C]_{=}\right)$.
By assumption $B\rightsquigarrow_{=}\stackrel{q_1}{\longrightarrow}[A_1]_{=}$ and $B\rightsquigarrow_{=}\stackrel{q_2}{\longrightarrow}[A_2]_{=}$.
It follows that $B\,|\,D\rightsquigarrow_{\mathcal{R}}\stackrel{q}{\longrightarrow}[A_1\,|\,D]_{\mathcal{R}}$.
\item
An edge from $A\,|\,C$ to $A'\,|\,C'$ labeled $1$ is caused by $A\stackrel{a}{\longrightarrow}A'$ and $C\stackrel{\overline{a}}{\longrightarrow}C'$.
Then $B\rightsquigarrow_{=}\stackrel{a}{\longrightarrow}[A']_{=}$ and $D\rightsquigarrow_{=}\stackrel{\overline{a}}{\longrightarrow}[C']_{= }$.
It follows that $B\,|\,D\rightsquigarrow_{\mathcal{R}^{\circ}}\stackrel{a}{\longrightarrow}[A'\,|\,C']_{\mathcal{R}^{\circ}}$.
\end{itemize}
Thus $\mathcal{R}^{\circ}$ is a branching bisimulation.
The proof that $\mathcal{R}^{\circ}$ is codivergent is similar.

Next we argue that $=$ is closed under localization.
Define $\mathcal{S} \,\stackrel{\rm def}{=} \{((a)A,(a)B) \mid A=B\}$.
We show that $\mathcal{S}^{\circ} \,\stackrel{\rm def}{=} (\mathcal{S}\;\cup=)^{*}$ is a codivergent bisimulation.
Suppose $(a)A\,\mathcal{S}\,(a)B=\ldots\mathcal{S}=(a)D$ and that $t_{(a)A}$ is an $\epsilon$-tree of $(a)A$.
This $\epsilon$-tree is derived from a silent tree of $A$.
In the silent tree of $A$ an edge say $A'\stackrel{\tau}{\longrightarrow}A''$ may not be state-preserving, even though $(a)A'\stackrel{\tau}{\longrightarrow}(a)A''$ is state-preserving.
Suppose $B'=A'$ and $A'\stackrel{\tau}{\longrightarrow}A''$ is bisimulated by $B'\rightsquigarrow_{=}\stackrel{\tau}{\longrightarrow}[A'']_{=}$.
It is easily seen that $(a)B'\rightsquigarrow_{\mathcal{S}^{\circ}}\stackrel{\tau}{\longrightarrow}[(a)A'']_{\mathcal{S}^{\circ}}$ bisimulates $(a)A'\stackrel{\tau}{\longrightarrow}(a)A''$.
Arguing in this manner and using induction we show that if $\ell\ne\tau\vee[(a)A]_{S^{\circ}}$, then $(a)A\rightsquigarrow_{\mathcal{S}^{\circ}}\stackrel{\ell}{\longrightarrow}\mathcal{C}$ is bisimulated by some $(a)B\rightsquigarrow_{\mathcal{S}^{\circ}}\stackrel{\ell}{\longrightarrow}\mathcal{C}$.
The codivergence is easy.
\end{proof}

Referring to Example~\ref{ex-5} we see that $\mu X.(\frac{1}{2}\tau.(a+\tau.X)\oplus\frac{1}{2}\tau.(b+\tau.X))=\mu X.(a+b+\tau.X)$.

\section{Comment}\label{Comment}

We have proposed a model independent approach that turns a process model $\mathbb{M}$ into a randomized extension of $\mathbb{M}$.
We have demonstrated how to build up the bisimulation semantics of the randomized $\mathbb{M}$ on the bisimulation semantics of $\mathbb{M}$.
In our approach the bisimulation equality of the randomized $\mathbb{M}$ is a conservative extension of that of $\mathbb{M}$.
This is because $\epsilon$-trees of $A$ with regard to an equivalence $\mathcal{E}$ are the same as $A\Longrightarrow_{\mathcal{E}}$ if $A$ is a process in $\mathbb{M}$.
For example $A=_{\mathrm{CCS}}B$ if and only if $A=_{\mathrm{RCCS}}B$ for all $A,B\in\mathcal{P}_{\mathrm{CCS}}$.

The philosophy of the model independent method is that randomization is a computational property.
An external action cannot really be random because it depends on an open-ended environment.
An external action may appear random as a consequence of computational randomness.
Random computation is the reason; random interaction is a consequence.

The model independent approach can be investigated from the perspective of axiomatization~\cite{HanssonJonsson1989,JouSmolka1990,BaetenBergstraSmolka1995,StarkSmolka1999,BandiniSegala2001,DengPalamidessi2007},
equivalence checking algorithm~\cite{BaierHermanns1997,Philippou1LeeSokolsky2000}, logical characterization~\cite{SegalaLynch1994}, other equivalences say testing equivalence~\cite{DeNicolaHennessy1984,Hennessy1988}.
In the light of previous works on these issues, we expect that all the investigations are routine exercises.

\vspace*{3.5mm}

\noindent{\bf Acknowledgment}.
We are grateful to the support from NSFC (61472239, 91318301).
We would like to thank Yuxin Deng and the members of BASICS for discussions on the issue.


\end{document}